\newtheorem*{thm*}{Theorem}
\newtheorem{cor}{Corollary}
\newtheorem*{cor*}{Corollary}
\newtheorem{dfn}{Definition}
\newtheorem*{dfn*}{Definition}
\begin{document}

\title{Quantum State Recovery via Direct Sum Formalism \\Without Measurement Outcomes}

\author{Taiga Suzuki}%
\affiliation{%
Department of Physics, Institute of Science Tokyo, Meguro-ku, Tokyo 152-9551, Japan
}%
\author{Masayuki Ohzeki}%
    \affiliation{%
    Department of Physics, Institute of Science Tokyo, Meguro-ku, Tokyo 152-9551, Japan
    }%
    \affiliation{%
    Graduate School of Information Sciences, Tohoku University, Sendai, Miyagi 980-9564, Japan
    }%
    \affiliation{%
    REISI, Kumamoto University, Kumamoto, Kumamoto 860-8555, Japan
    }%
    \affiliation{%
    Sigma-i Co., Ltd., Minato-ku, Tokyo 108-0075, Japan}%

\date{\today}%

\begin{abstract}
    This study proposes a new approach to quantum state recovery following measurement.
   Specifically, we introduce a special operation that transfers the probability amplitude of the quantum state into its orthogonal complement. This operation is followed by a measurement performed on this orthogonal subspace, enabling the undisturbed original quantum state to be regained.
   Remarkably, this recovery is achieved without dependence of the post-measurement operation on the measurement outcome, thus allowing the recovery without historical dependence.
   This constitutes a highly nontrivial phenomenon. From the operational perspective, as the no-cloning theorem forbids perfect and probabilistic cloning of arbitrary quantum states, and traditional post-measurement reversal methods typically rely on operations contingent on the measurement outcomes, it questions fundamental assumptions regarding the necessity of historic dependence.
   From an informational perspective, since this recovery method erases the information about the measurement outcome, it's intriguing that the information can be erased without accessing the measurement outcome.
   These results imply the operational and informational non-triviality formulated in a direct-sum Hilbert space framework.
\end{abstract}

\maketitle
\section{Introduction}
The state is inevitably disturbed when a measurement is performed on a quantum state. This raises a fundamental question: is restoring the disturbed state to its original form possible? 
The reversibility of quantum states has been the subject of extensive debate over the years, 
encompassing theoretical areas such as quantum foundations \cite{Zurek_2018,PhysRevA.83.044101,dass2017repeatedcontinuousweakmeasurements,Lee2021completeinformation,PhysRevA.93.032134}, 
the information-disturbance trade-off \cite{PhysRevLett.100.210504,PhysRevLett.112.050401,PhysRevLett.109.150402}, 
quantum error correction \cite{PhysRevA.54.1098, PhysRevLett.82.2598, Wang_2024} 
and even protocols such as quantum teleportation \cite{PhysRevLett.70.1895,PhysRevResearch.3.033119}. 
Beyond theory, the reversibility question has also been addressed in experiments, 
which demonstrate that partial recovery of post-measurement states can indeed be achieved 
in a variety of physical platforms \cite{Katz2008,Kim2012,PhysRevX.4.021043,XuYou-Yang_2010,Wang:2025uff,PhysRevLett.128.050401,PhysRevLett.97.166805}.
Therefore, understanding the mechanism of state recovery after measurement is a fundamental issue in quantum mechanics, with significant implications for theoretical, experimental, and applied quantum information science.

There are two major approaches to addressing this question.
The first is a well-studied method called Quantum Reversible Measurement (QRM), Measurement Reversal or Undoing Measurement~\cite{PhysRevLett.73.913,Terashima_2006,PhysRevA.53.3808,Jordan_2010}. For simplicity, we call these methods QRM. The disturbance caused by a quantum measurement can be described using measurement operators~\cite{nielsen00,Hayashi2015}. If this operator admits an inverse, it becomes possible—though only probabilistically—to recover the original state by applying this inverse operation after the measurement. Under these conditions, any quantum state can be measured and subsequently restored using QRM.

The second approach considers recovering the original quantum state by creating a copy of the state in advance, performing the measurement on the copied state, and retrieving the untouched original state. However, due to the no-cloning theorem~\cite{Wootters1982Single,Hayashi2015}, which prohibits perfect or probabilistic cloning of arbitrary quantum states, this approach is generally regarded as impossible, despite the existence of specific probabilistic cloning protocols that apply under restrictive conditions~\cite{PhysRevLett.80.4999}.

If, however, it were possible to realize this second approach through a suitable operation, it would constitute a highly counterintuitive and nontrivial phenomenon from both operational and informational perspectives. As for the operational perspectives, in contrast to conventional QRM, where the post-measurement operation necessarily depends on the measurement outcome due to the uniqueness of the inverse operation, this method would enable recovery of the original quantum state entirely independent of the measurement outcome.
As for the informational perspectives, the important feature of the recovery process is that the information gained by the first measurement is erased by the second recovery operation~\cite{PhysRevA.83.044101,PhysRevLett.74.1040,buscemi2017information,Jordan_2010}. Since the second approach enables us to recover the initial state independent of the measurement outcomes, this implies that the information erasure of the first measurement is done without accessing the measurement outcomes.
Thus, the existence of such a recovery mechanism would be both operationally and informationally intriguing.

Of particular interest is that this study succeeded in constructing a process that realizes the second approach by leveraging a mechanism. In this framework, the probability amplitude of a quantum state is not fully duplicated, but instead transferred to its orthogonal complement within a direct-sum Hilbert space decomposition.
This process enables the measurement of only the quasi-copied state and the probabilistic recovery of the original quantum state without any reliance on the measurement outcomes. Notably, the method applies to arbitrary quantum states and can be implemented physically.

Although various cloning methods have been proposed and analyzed in the past~\cite{RevModPhys.77.1225, PhysRevA.54.1844, PhysRevLett.80.4999,PhysRevA.62.012302,Fan_2003,Cerf_2000}, the idea of transferring the amplitude to an orthogonal complement within a direct-sum structure has not been explored. Thus, our result provides a novel example of how expressing quantum states in a direct-sum Hilbert space enables nontrivial phenomena, enriching the conceptual framework of quantum information theory.

\section{Preliminaries}
\subsection{Definition and Notations}
Firstly, let us introduce the definitions and notations used in this paper.
$d$-dimensional Hilbert spaces are described as 
$\mathcal{H}_d$. 
In this paper, only finite-dimensional Hilbert spaces are considered.
The orthogonal complement space of $\mathcal{H}_d$ is described as 
$\mathcal{H}_d^{\bot}$.
Additionally, the set of linear operators on the Hilbert space $\mathcal{H}_d$ is described as 
$L(\mathcal{H}_d)$ and the set of quantum states is described as $S(\mathcal{H}_d)$.

The following defines the operations that simultaneously perform linear operators on the following Hilbert space and its orthogonal complement space:

\begin{dfn} \label{dfn:direct sum}
  $\hat{A}$ and $\hat{B}$ are linear operators that act on a Hilbert space $\mathcal{H}_d$ and its orthogonal complement space $\mathcal{H}_d^{\bot}$, respectively.
  The followings are linear operators that act simultaneously on the respective subspaces:
  \begin{align}
    \begin{split}
    \hat{A}  \oplus  \hat{B} \in  L(\mathcal{H}_d\oplus \mathcal{H}_d^{\bot})  \stackrel{\mathrm{def.}}{\iff} 
    \begin{cases}
      \hat{A} : \mathcal{H}_d\to \mathcal{H}_d,\\
      \hat{B} : \mathcal{H}_d^{\bot}\to \mathcal{H}_d^{\bot},
    \end{cases}\\
    \hat{C}  \boxplus  \hat{D} \in  L(\mathcal{H}_d\oplus \mathcal{H}_d^{\bot})  \stackrel{\mathrm{def.}}{\iff} 
    \begin{cases}
      \hat{C} : \mathcal{H}_d^{\bot}\to \mathcal{H}_d,\\
      \hat{D} : \mathcal{H}_d\to\mathcal{H}_d^{\bot}.
    \end{cases}.
  \end{split}
  \end{align}
  Note that the linear operation $\hat{A}  \oplus  \hat{B}$ corresponds to the block diagonal terms, while $\hat{C}  \boxplus  \hat{D}$ corresponds to the block off-diagonal terms in the matrix representation.
\end{dfn}
The properties and theorems of the direct sum space to be used are summarized in Appendix \ref{app:theorems}.
\subsection{Framework of Quantum Measurement Theory}

In the standard framework of quantum measurement theory, 
a measurement on a quantum system described by a density operator 
$\hat{\rho} \in S(\mathcal{H}_d)$ 
is represented by a collection of linear operators 
$\{ \hat{M}_x \}_{x}$, labeled by one of the possible measurement outcomes $x\in \{x \}$. 
These operators satisfy the completeness relation
\begin{align}
\sum_x \hat{M}_x^\dagger \hat{M}_x = \hat{1},
\end{align}
where $\hat{1}$ denotes the identity operator on $\mathcal{H}_d$. 
The above condition ensures that the total probability of obtaining some outcome is normalized to unity.  

When a specific outcome $x$ is observed, the post-measurement state of the system is given by the transformation
\begin{align}
\hat{\rho} \;\mapsto\; \frac{\hat{M}_x \, \hat{\rho} \, \hat{M}_x^\dagger}{\mathrm{Tr}\!\left[\hat{M}_x \hat{\rho} \hat{M}_x^\dagger\right]},
\end{align}
with the associated probability
\begin{align}
P[x] = \mathrm{Tr}\!\left[\hat{M}_x \hat{\rho} \hat{M}_x^\dagger\right].
\end{align}

Importantly, every measurement process of this form can be physically realized 
by embedding the system into a larger Hilbert space, 
applying a global unitary evolution, 
and subsequently performing a projective measurement on an ancillary subsystem~\cite{nielsen00,Hayashi2015}.

\subsection{Quantum Reversible Measurement}\label{sec:QRM}

Quantum reversible measurement (QRM) provides a theoretical framework in which 
the disturbed quantum state can be restored to its pre-measurement form by means 
of a subsequent recovery measurement~\cite{PhysRevLett.73.913,Terashima_2006,PhysRevA.53.3808,Jordan_2010}. 

In this framework, a necessary condition for reversibility is that each of the 
initial measurement operators $\hat{M}_x$ is regular, i.e., admits an inverse 
operator $\hat{M}_x^{-1}$~\cite{PhysRevA.53.3808,PhysRevLett.82.2598}. 
Given outcome $x$, the recovery measurement can then be constructed by including, 
as one of its Kraus operators, an operator proportional to the inverse,
\begin{equation}
\hat{R}_x \propto \hat{M}_x^{-1}.
\end{equation}
The proportionality constant is chosen to ensure that the recovery operators 
satisfy the completeness relation. Because the inverse operator $\hat{M}_x^{-1}$ 
is unique, the construction of the recovery measurement inevitably depends on the 
specific outcome $x$.

Nevertheless, as will be discussed in the following section, the utilization of 
direct-sum structures in the Hilbert space makes it possible to achieve recovery 
of the original quantum state without explicit reliance on the measurement outcomes.

\section{Methodology}
Next, let us propose the method for copying an arbitrary quantum state to its orthogonal complement space, followed by the measurement and recovery processes. The procedure consists of the following four steps:
\begin{enumerate}[label=(\Alph*)]
  \item Preparation of an arbitrary unknown initial state and addition of a qubit ancillary system. \label{step:preparation}
  \item Implementation of a specialized quantum channel to create a copy of the quantum state in the orthogonal complement space. \label{step:copying}
  \item Execution of a measurement on one of the copied quantum states. \label{step:measurement}
  \item Recovery of the unmeasured quantum state. \label{step:recovery}
\end{enumerate}
Note that the final operation (D) is stochastic due to the imperfect nature of copying into the orthogonal complement space.

The subsequent subsections provide a detailed description of each operation.

\indent \ref{step:preparation} Preparation of the Initial System and Addition of the Ancillary System:\ Initially, consider a quantum state $\hat{\rho} \in S(\mathcal{H}'_d)$ defined on a $d$-dimensional Hilbert space $\mathcal{H}'_d$. 
An ancillary qubit system is then introduced, prepared in a fixed pure state $\ket{a} \in \mathcal{H}_2$, 
yielding the composite system in the tensor product space $\mathcal{H}'_d \otimes \mathcal{H}_2$. 
The resulting total state is given by
\begin{align}
\hat{\rho} \mapsto \hat{\rho} \otimes \ketbra{a} \in S(\mathcal{H}'_d \otimes \mathcal{H}_2).
\end{align}

Since there always exists a state $\ket{a^\bot}$ orthogonal to $\ket{a}$, 
the total Hilbert space admits the orthogonal decomposition
\begin{align}
\mathcal{H}'_d \otimes \mathcal{H}_2 = \mathcal{H}_d \oplus \mathcal{H}_d^\bot,
\end{align}
where
\begin{align}
  \mathcal{H}_d := \mathcal{H}'_d \otimes \mathrm{span}\{\ket{a}\},\quad
  \mathcal{H}_d^\bot := \mathcal{H}'_d \otimes \mathrm{span}\{\ket{a^\bot}\}. \label{def:h_d^bot}
\end{align}

With respect to this decomposition, the block-diagonal components of 
$\mathcal{H}_d$ and $\mathcal{H}_d^{\bot}$ can be represented in the orthonormal 
basis $\{\ket{\psi_i}\}$ of $\mathcal{H}'_d$ as
\begin{align}
  \begin{split}
    &(\bra{\psi_i}\otimes \bra{a})(\hat{\rho}\otimes\ketbra{a})(\ket{\psi_j}\otimes \ket{a}) 
      = \bra{\psi_i}\hat{\rho}\ket{\psi_j},\\
    &(\bra{\psi_i}\otimes \bra{a^{\bot}})(\hat{\rho}\otimes\ketbra{a})(\ket{\psi_j}\otimes \ket{a^{\bot}}) 
      = 0.\label{def:matrix_components}
  \end{split}
\end{align}
In a similar manner, evaluation of the block off-diagonal components shows $\hat{0}\boxplus \hat{0}$.
Therefore, the total state $\hat{\rho}\otimes\ketbra{a}$ on $\mathcal{H}'_d \otimes \mathcal{H}_2$ can be written, omitting the vanishing off-diagonal terms, as
\begin{align}
  \hat{\rho} \oplus \hat{0} \quad \text{on} \quad \mathcal{H}_d \oplus \mathcal{H}_d^{\bot}.\label{def:initial_state}
\end{align}

Notably, the matrix representation of $\hat{\rho} \oplus \hat{0}$ is independent of the specific choice of $\ket{a}$ when expressed as in Eq.\eqref{def:matrix_components}. Consequently, the decomposition is valid for any orthonormal basis of the ancillary qubit, resulting in the same quantum state on the direct-sum structure. Since operator properties on direct-sum spaces (Appendix~\ref{app:theorems}) are likewise basis-independent, the generality of the proposed method is preserved, though writing out the explicit forms of $\mathcal{H}_d$ and $\mathcal{H}_d^\bot$ is instructive for clarity. A particular choice of $\ket{a}$ with a more transparent physical interpretation is presented in Appendix~\ref{app:special_case}.

\indent \ref{step:copying} Quantum Channel for Copying the Quantum State to the Orthogonal Complement Space:\ The copying process can be implemented by applying a quantum channel which has  $\hat{K}_0,\hat{K}_1$ as the Kraus operators parameterized by an angle $\phi\in \mathbb{R}$:
  \begin{align}
    \begin{split}
    &\hat{K}_0 = (\cos\phi \cdot \hat{1})\oplus\hat{0} + (\sin\phi \cdot \hat{1})\boxplus\hat{0},\\
    &\hat{K}_1 = \hat{0}\oplus(-\cos\phi \cdot \hat{1}) + \hat{0}\boxplus(\sin\phi \cdot \hat{1}). \label{K_0,K_1}
  \end{split}
  \end{align}
The explanation of how these operators satisfy the conditions of Kraus operators is described in Appendix~\ref{app:validity}.

And then, this quantum channel is represented and calculated as follows using Corollary \ref{Product of operators on the direct sum space} and Corollary \ref{Sum of operators on the direct sum space} in Appendix~\ref{app:theorems}:
\begin{align}
  \mathcal{E}(\hat{\rho}\oplus\hat{0})= (\cos^2\phi \cdot \hat{\rho}) \oplus (\sin^2\phi \cdot \hat{\rho}).
\end{align}
Therefore, this state can be interpreted as the state that has been copied to the orthogonal complement space $\mathcal{H}_d^\bot$.
We call this state a quasi-copied state as it is not the perfect copy of the quantum state.

\indent \ref{step:measurement} Measurement on the quasi-copied state:\ 
Next, a measurement is performed on one of the quasi-copied states. The measurement process $\mathcal{M}$ is characterized as the measurement operators which combine the identity operator on $\mathcal{H}_d$ and the measurement operator $\hat{M}_{\nu}$ defined on $\mathcal{H}_d^\bot$ and which satisfies the conditions for a valid measurement operator as detailed in Appendix~\ref{app:validity}. This construction ensures that the subspace $\mathcal{H}_d$ remains undisturbed by the measurement, and the measurement is conducted only on the orthogonal complement space $\mathcal{H}_d^\bot$. 

Formally, the measurement process is expressed as:
\begin{align}
  \mathcal{M}:\left\{\left(\frac{1}{\sqrt{n}}\hat{1}\right)\oplus \hat{M}_{\nu}\right\}_{\nu},\quad \nu= 1,2,\ldots,n. \label{measurement}
\end{align}

By this measurement process, the probability of obtaining the measurement outcome $\nu$ is calculated as:
\begin{align}
  P[\nu] = \frac{\cos^2\phi}{n} + \sin^2\phi\cdot\text{tr}\left( \hat{M}_{\nu}\hat{\rho}\hat{M}_{\nu}^\dagger \right). \label{P[nu]}
\end{align}
On the other hand, the post-measurement state corresponding to the outcome $\nu$ is calculated as:
\begin{align}
  \frac{1}{P[\nu]} \left[ \left(\frac{\cos^2\phi}{n} \hat{\rho} \right)\oplus \left( \sin^2\phi\cdot \hat{M}_{\nu}\hat{\rho}\hat{M}_{\nu}^\dagger \right) \right].
\end{align}
As shown above, the measurement process is conducted only on the orthogonal complement space $\mathcal{H}_d^\bot$, and the quantum state in the subspace $\mathcal{H}_d$ remains unaffected by the measurement.

\indent \ref{step:recovery} Recovery of the initial quantum state:\ Finally, the initial quantum state, unaffected by the measurement operator, is recovered. This restoration is achieved through a subsequent measurement process $\mathcal{R}$, defined as:
\begin{align}
  \mathcal{R}:\{\hat{R}_{\mu_0}:=\hat{1} \oplus \hat{0},\> \hat{R}_{\mu_1}:=\hat{0} \oplus \hat{1}\}. \label{recovery}
\end{align}
The explanation of how these operators satisfy the conditions of Kraus operators is described in Appendix~\ref{app:validity}.
As we will see below, by post-selecting the outcome $\mu_0$, the restoration of the initial quantum state is successfully done.
The post-measurement state corresponding to the outcome $\mu_0$ is calculated as:
\begin{widetext}
  \begin{align}
    \frac{1}{P[\mu_0|\nu]} \left(\hat{1} \oplus \hat{0}\right) \frac{1}{P[\nu]} \left[ \left(\frac{\cos^2\phi}{n} \hat{\rho} \right)\oplus \left( \sin^2\phi\cdot\hat{M}_{\nu}\hat{\rho}\hat{M}_{\nu}^\dagger \right)\right] \left(\hat{1} \oplus \hat{0}\right)^\dagger = \hat{\rho}\oplus \hat{0}.
  \end{align}
\end{widetext}
Where the probability of successfully obtaining the outcome $\mu_0$ is given by:
\begin{align}
  P[\mu_0|\nu] = \frac{1}{P[\nu]}\frac{\cos^2\phi}{n}.
\end{align}
And since $\hat{\rho}\oplus\hat{0}$ is equivalent to $\hat{\rho}\otimes \ketbra{a}$,
by eliminating the ancillary system, we can recover the initial quantum state $\hat{\rho}$.

Since the probability of recovering the initial quantum state $(:= P[\text{rev}])$ corresponds to the probability of obtaining the outcome $\mu_0$ in this process, it is given by:
\begin{eqnarray}
  P[\text{rev}]:=P[\mu_0]= \sum_{\nu}P[\mu_0|\nu]P[\nu] = \cos^2\phi. \label{prob_rev}
\end{eqnarray}

Significantly, the recovery process $\mathcal{R}$ does not depend on the measurement outcome $\nu$, but the recovery process is constructed.

As shown above, we successfully constructed the quantum state recovery process utilizing the direct sum formalism without measurement outcomes.

\section{Discussions}

\subsection{How the Information Erasure Occurs} 
The key feature of quantum reversible measurement (QRM) is that any information obtained from the initial measurement is effectively erased by the subsequent inverse measurement~\cite{PhysRevA.83.044101,PhysRevLett.74.1040,buscemi2017information,Jordan_2010}.
Let us see how this information erasure occurs in our proposed scheme.

Before discussing the details of information erasure, we must first clarify the meaning of information in this context.
There are several approaches to quantifying the information gained in a measurement. In particular, in~\cite{PhysRevA.82.062306}, information is defined as the resource necessary to answer a question. In this context, the only available resource for answering arbitrary questions is the set of measurement outcomes obtained. More precisely, the extent to which these outcomes depend on the quantum state is critical in determining how much information is extracted during the measurement. Therefore, we examine how strongly the measurement outcomes depend on the initial quantum state.

For the initial measurement, the probability distribution is given by Eq.\eqref{P[nu]}:
\begin{align}
  P[\nu] = \frac{\cos^2\phi}{n} + \sin^2\phi \cdot \text{tr}\!\left( \hat{M}_{\nu}\hat{\rho}\hat{M}_{\nu}^\dagger \right).
\end{align}
If $\text{tr}\left( \hat{M}_{\nu}\hat{\rho}\hat{M}_{\nu}^\dagger \right)$ depends on the quantum state $\hat{\rho}$, this indicates that information about the state is indeed extracted by the measurement.

Next, consider the posterior probability of obtaining the outcome $\nu$ given that the reversal was successful, denoted by $P[\nu|\mu_0]$. By Bayes’ theorem, this is given by:
\begin{align}
  P[\nu|\mu_0] = \frac{P[\mu_0|\nu]P[\nu]}{P[\mu_0]} = \frac{1}{n}.
\end{align}
This result implies that, although the initial measurement may reveal some information about the quantum state, 
this information is completely erased if successful reversal is taken as a condition. Consequently, information erasure is an inherent part of the recovery process.

The following example can illustrate the non-triviality of this phenomenon. Suppose Alice performs steps (A) through (C), after which Bob carries out step (D). Once Alice has completed her task, she leaves the laboratory without communicating the measurement outcome to Bob. Nevertheless, if Bob subsequently obtains the outcome $\mu_0$ when measuring the recovered state $\hat{\rho}$, he is able to erase the information that Alice possessed—despite having no knowledge of her measurement result and without any communication with Alice. This challenges the conventional understanding, as in the case of QRM, that access to the measurement outcomes is necessary in order to erase the acquired information.

Moreover, estimating the original quantum state by performing a measurement followed by its reversal is impossible. This observation is consistent with the fundamental principle that any information gained through measurement inevitably disturbs the quantum state, as is also the case in QRM.

\subsection{Trade-off between Information Gain and Reversibility}
As a second aspect of discussion, we examine the trade-off between information gain and reversibility in comparison with QRM.

As we discussed, the information obtained from measurement is determined by the probability distribution of outcomes. To compare QRM with our scheme, we analyze these distributions under the same input state.

As shown in Sec.~\ref{sec:QRM}, QRM restores the original state probabilistically. it first measures specific conditions, then applies a second measurement containing the inverse of the first operator. For a direct comparison, we require the first measurement probabilities to coincide—denote those of QRM as $P^{\text{(QRM)}}[\nu]$—and then compare reversal probabilities. We will show that, under this condition, the reversibility of our scheme matches that of QRM. Thus, both exhibit the same trade-off between information gain and reversibility.

Suppose $\hat{\rho}$ lies in $\mathcal{H}_d^\perp$, and a measurement operator $\hat{m}_\nu$ acts on the same subspace. The condition for identical outcome probabilities is:
\begin{align}
\text{tr}(\hat{m}_{\nu} \hat{\rho} \hat{m}_{\nu}^\dagger) &= \frac{\cos^2 \phi}{n} + \sin^2 \phi \cdot \operatorname{tr}(\hat{M}_{\nu} \hat{\rho} \hat{M}_{\nu}^\dagger).
\end{align}
Since this holds for arbitrary quantum state $\hat{\rho}$, the measurement operator $\hat{m}_\nu$ satisfies the following relation:
\begin{align}
\hat{m}_{\nu}^\dagger \hat{m}_{\nu} 
= \frac{\cos^2 \phi}{n} + \sin^2 \phi \cdot \hat{M}_{\nu}^\dagger \hat{M}_{\nu}. \label{condition:comparison}
\end{align}
Applying singular value decomposition (SVD), let the singular values of $\hat{m}_\nu$ be $\lambda_i$, and let $\hat{U}$ be a unitary operator with $\{ \ket{\Psi_i^\perp} \}_i$ as an orthonormal basis. Then we have:
\begin{align}
\hat{m}_{\nu}^\dagger \hat{m}_{\nu} 
= \hat{U} \left( \sum_i \lambda_i^2 \ketbra{\Psi_i^\perp} \right) \hat{U}^\dagger.
\end{align}
Since $\hat{m}_\nu$ must be a valid measurement operator, we require $0 \leq \lambda_i^2$ for all $i$.
Now, solving for $\hat{M}_\nu$ from Eq.\eqref{condition:comparison}, we obtain:
\begin{align}
\hat{M}_{\nu}^\dagger \hat{M}_{\nu} 
= \hat{U} \left( \sum_i \frac{1}{\sin^2 \phi}\left( \lambda_i^2 - \frac{\cos^2 \phi}{n} \right) \ketbra{\Psi_i^\perp} \right) \hat{U}^\dagger.
\end{align}
In this expression, the terms $(\sin^2 \phi)^{-1}\left( \lambda_i^2 - {\cos^2 \phi}/{n} \right)$ can be interpreted as the singular values of $\hat{M}_{\nu}^\dagger \hat{M}_{\nu}$. To ensure positivity of $\hat{M}_{\nu}^\dagger \hat{M}_{\nu}$, the following condition must be satisfied:
\begin{align}
\frac{1}{\sin^2\phi}\left(\lambda_i^2 - \frac{\cos^2 \phi}{n}\right) \geq 0, \quad \forall i.
\end{align}
Therefore, the minimum value $\lambda_i$ can take is given as follows:
\begin{align}
 (\lambda_i^2)^{\text{min}} = \frac{\cos^2 \phi}{n}, \quad \forall i \label{lambda_phi}
\end{align}
On the other hand, as shown in~\cite{Terashima_2006}, the maximum probability of reversibility $P^{(\text{QRM})}[\mathrm{rev}]$ is also given by:
\begin{align}
P^{(\text{QRM})}[\mathrm{rev}] = \sum_\nu (\lambda_i^2)^{\text{min}}.
\end{align}
Thus, from Eq.\eqref{lambda_phi}, the maximum probability of reversibility $P^{(\text{QRM})}[\mathrm{rev}]$ is expressed using $\phi$ as follows: 
\begin{align}
  P^{(\text{QRM})}[\mathrm{rev}] = \cos^2 \phi.
\end{align}
Since $P[\mathrm{rev}]$ of our scheme is $\cos^2 \phi$ as given in Eq.\eqref{prob_rev}, the reversing probability is in agreement with $P^{(\text{QRM})}[\mathrm{rev}]$ of QRM as shown above.

Thus, by setting the measurement probabilities to be equal, the probability of successful restoration (i.e., reversibility) becomes identical for both methods.

Therefore, we conclude that the trade-off between information gain and reversibility is identical for our proposed scheme and the QRM. This finding also challenges the common intuition that utilizing measurement outcomes should allow for greater reversibility at a fixed level of information gain. As demonstrated above, this intuition does not hold.

\section{Conclusion}
In this work, we introduced a scheme for the probabilistic recovery of quantum states that dispenses with the need for measurement records by exploiting structural features of quantum systems. The central components are the notion of a quasi-copied state, which transfers the state into the orthogonal complement of the Hilbert space, and a special class of measurements defined thereon.

This direct-sum formulation, valid for arbitrary ancillary pure states, extends beyond conventional tensor-product approaches and enables a recovery protocol of both operational and informational significance. Operationally, it demonstrates the feasibility of recovery without outcome-dependent feedback, a previously regarded implausible scenario. 
Informationally, it shows that erasure of the information from a measurement can be achieved through the operation on the system, without acting on the register that stores the outcomes, which challenges the prevailing intuition regarding the necessity of memory access in informational erasure.
Furthermore, we have shown that the trade-off between information gain and reversibility is identical for the QRM and our proposed scheme. 
This finding contradicts the common belief that reversibility can be enhanced by conditioning on measurement outcomes, thereby underscoring the universality of this trade-off.

Finally, the independence from outcome records also simplifies the experimental realization of information erasure, removing the feedback control requirement. Taken together, these results highlight both the conceptual novelty—rooted in interference effects within the direct-sum structure—and the practical potential of our scheme for advancing the understanding of quantum measurement and information processing.

\section*{Acknowledgements}
We are grateful to Yuki Ito for his valuable comments.
This study was financially supported by programs for bridging the gap between R\&D and IDeal society (Society 5.0) and Generating Economic and social value (BRIDGE) and Cross-ministerial Strategic Innovation Promotion Program (SIP) from the Cabinet Office 23836436.

\bibliography{references}{}

\newpage
\appendix
\section{Theorems}\label{app:theorems}
We will show the corollaries that are used in this paper.
\begin{cor}[Operators on the direct sum space] \label{Product of operators on the direct sum space}
  The following properties hold for the product of operators on the direct sum space:
  \begin{align}
    \begin{split}
      (\hat{A}\oplus\hat{B})(\hat{C}\oplus\hat{D}) &= (\hat{A}\hat{C})\oplus(\hat{B}\hat{D}),\\
    (\hat{A}\oplus\hat{B})(\hat{C}\boxplus\hat{D}) &= (\hat{A}\hat{C})\boxplus(\hat{B}\hat{D}),\\
    (\hat{A}\boxplus\hat{B})(\hat{C}\oplus\hat{D}) &= (\hat{A}\hat{D})\boxplus(\hat{B}\hat{C}),\\
    (\hat{A}\boxplus\hat{B})(\hat{C}\boxplus\hat{D}) &= (\hat{A}\hat{D})\oplus(\hat{B}\hat{C}).
    \end{split}
  \end{align}
\end{cor}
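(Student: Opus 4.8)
The plan is to prove all four identities by passing to the block structure induced by the orthogonal decomposition $\mathcal{H}_d\oplus\mathcal{H}_d^\bot$ and composing the pieces. By Definition~\ref{dfn:direct sum}, an operator $\hat{A}\oplus\hat{B}$ is block-diagonal, its only nonzero blocks being the $\mathcal{H}_d\to\mathcal{H}_d$ map $\hat{A}$ and the $\mathcal{H}_d^\bot\to\mathcal{H}_d^\bot$ map $\hat{B}$, whereas $\hat{C}\boxplus\hat{D}$ is block off-diagonal, with the $\mathcal{H}_d^\bot\to\mathcal{H}_d$ block $\hat{C}$ and the $\mathcal{H}_d\to\mathcal{H}_d^\bot$ block $\hat{D}$. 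Writing a generic vector as $v=v_d+v_\bot$ with $v_d\in\mathcal{H}_d$ and $v_\bot\in\mathcal{H}_d^\bot$, the two elementary actions are $(\hat{A}\oplus\hat{B})v=\hat{A}v_d+\hat{B}v_\bot$ and $(\hat{C}\boxplus\hat{D})v=\hat{C}v_\bot+\hat{D}v_d$. Every identity then reduces to composing these two actions and reading off which subspace each intermediate vector occupies.

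First I would dispatch the two identities whose first factor is block-diagonal. Applying $\hat{A}\oplus\hat{B}$ after $\hat{C}\oplus\hat{D}$ keeps each summand in its own subspace, so the $\mathcal{H}_d$ part is acted on by $\hat{A}\hat{C}$ and the $\mathcal{H}_d^\bot$ part by $\hat{B}\hat{D}$, which is the first identity. Applying $\hat{A}\oplus\hat{B}$ after $\hat{C}\boxplus\hat{D}$ postcomposes the $\mathcal{H}_d$-valued part $\hat{C}v_\bot$ by $\hat{A}$ and the $\mathcal{H}_d^\bot$-valued part $\hat{D}v_d$ by $\hat{B}$, yielding $(\hat{A}\hat{C})\boxplus(\hat{B}\hat{D})$. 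Both cases are immediate because the block-diagonal factor never moves a vector between subspaces.

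The remaining two identities, with a block off-diagonal first factor, are where the index pairing gets permuted, and this is the step requiring the most care. Consider $(\hat{A}\boxplus\hat{B})(\hat{C}\boxplus\hat{D})v$: after the inner factor, $\hat{C}v_\bot$ lies in $\mathcal{H}_d$ while $\hat{D}v_d$ lies in $\mathcal{H}_d^\bot$, so the outer factor $\hat{A}\boxplus\hat{B}$ applies $\hat{A}$ to the $\mathcal{H}_d^\bot$ component $\hat{D}v_d$ and $\hat{B}$ to the $\mathcal{H}_d$ component $\hat{C}v_\bot$. Because an off-diagonal factor swaps the two subspaces, $\hat{A}$ necessarily meets $\hat{D}$ and $\hat{B}$ meets $\hat{C}$, producing $(\hat{A}\hat{D})\oplus(\hat{B}\hat{C})$ rather than the naive $(\hat{A}\hat{C})\oplus(\hat{B}\hat{D})$. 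The third identity follows by the same domain bookkeeping with the two factors in the opposite order.

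The main obstacle is therefore purely one of bookkeeping: one must track which subspace each intermediate vector inhabits, so that every operator is applied only to arguments in its declared domain, and then confirm that the composite lands in the block type asserted on the right-hand side. I would make this explicit by checking domain--codomain compatibility for each of the four products; for instance, verifying that $\hat{A}\hat{D}$ in the last identity genuinely maps $\mathcal{H}_d\to\mathcal{H}_d$ (since $\hat{D}\colon\mathcal{H}_d\to\mathcal{H}_d^\bot$ followed by $\hat{A}\colon\mathcal{H}_d^\bot\to\mathcal{H}_d$) and hence is a legitimate $\oplus$-summand. This simultaneously shows that every product appearing is well defined. No analytic input is needed; once the index permutation in the off-diagonal cases is handled correctly, the statement is a direct consequence of Definition~\ref{dfn:direct sum}.
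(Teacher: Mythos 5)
Your proposal is correct and follows essentially the same route as the paper, whose proof is simply ``immediate from the definitions of the direct sum operators'': you unpack that remark by decomposing $v=v_d+v_\bot$ and tracking which subspace each intermediate vector occupies, correctly obtaining the index swap $(\hat{A}\hat{D}, \hat{B}\hat{C})$ in the two off-diagonal cases. Your version is merely a fully spelled-out form of the paper's one-line argument, with the added (and welcome) domain--codomain checks confirming that each composite is a legitimate $\oplus$- or $\boxplus$-summand.
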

\begin{proof}
  Immediate from the definitions of the direct sum operators.
\end{proof}

\begin{cor}[Sum of operators on the direct sum space] \label{Sum of operators on the direct sum space}
  The following properties hold for the sum of operators on the direct sum space:
  \begin{align}
    \begin{split}
    \hat{A}\oplus\hat{B} + \hat{C}\oplus\hat{D} &= (\hat{A}+\hat{C})\oplus(\hat{B}+\hat{D}),\\
    \hat{A}\boxplus\hat{B} + \hat{C}\boxplus\hat{D} &= (\hat{A}+\hat{C})\boxplus(\hat{B}+\hat{D}).
    \end{split}
  \end{align}
\end{cor}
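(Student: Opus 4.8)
The plan is to reduce both identities to the single elementary fact that operator addition on the direct-sum space acts block-by-block, so that two operators occupying the same block position simply add within that block. Since the statement concerns only sums and not products, no interaction between distinct blocks ever arises, and the whole argument amounts to unwinding Definition~\ref{dfn:direct sum}. I would therefore treat the proof as a direct verification rather than looking for any structural lemma.

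Concretely, I would first fix the matrix representation induced by the decomposition $\mathcal{H}_d \oplus \mathcal{H}_d^\bot$, exactly as noted after Definition~\ref{dfn:direct sum}. In this representation any $\hat{A}\oplus\hat{B}$ is block-diagonal, with $\hat{A}$ in the $\mathcal{H}_d$-block and $\hat{B}$ in the $\mathcal{H}_d^\bot$-block, while any $\hat{C}\boxplus\hat{D}$ is block off-diagonal, with $\hat{C}$ the $\mathcal{H}_d^\bot \to \mathcal{H}_d$ block and $\hat{D}$ the $\mathcal{H}_d \to \mathcal{H}_d^\bot$ block. Then I would add entrywise: for the first identity the two diagonal operators combine so that the $\mathcal{H}_d$-block becomes $\hat{A}+\hat{C}$ and the $\mathcal{H}_d^\bot$-block becomes $\hat{B}+\hat{D}$, yielding $(\hat{A}+\hat{C})\oplus(\hat{B}+\hat{D})$; for the second identity the two off-diagonal operators add position-by-position, so the $\mathcal{H}_d^\bot \to \mathcal{H}_d$ block becomes $\hat{A}+\hat{C}$ and the $\mathcal{H}_d \to \mathcal{H}_d^\bot$ block becomes $\hat{B}+\hat{D}$, yielding $(\hat{A}+\hat{C})\boxplus(\hat{B}+\hat{D})$. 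As a basis-independent alternative — more in keeping with the spirit of the paper — I would instead act on a generic vector $\ket{v}=\ket{v_d}+\ket{v_\bot}$ with $\ket{v_d}\in\mathcal{H}_d$ and $\ket{v_\bot}\in\mathcal{H}_d^\bot$, and check that the two sides agree in each summand; this uses nothing beyond the componentwise definition of addition.

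The only point requiring care, and hence the sole candidate for an obstacle, is the bookkeeping of block positions: one must confirm that the operators being summed genuinely live between the same pair of subspaces, i.e. that $\hat{A}$ and $\hat{C}$ share a common domain and codomain, and likewise $\hat{B}$ and $\hat{D}$. For the $\oplus$ case both $\hat{A},\hat{C}$ map $\mathcal{H}_d\to\mathcal{H}_d$ and both $\hat{B},\hat{D}$ map $\mathcal{H}_d^\bot\to\mathcal{H}_d^\bot$; for the $\boxplus$ case both $\hat{A},\hat{C}$ map $\mathcal{H}_d^\bot\to\mathcal{H}_d$ and both $\hat{B},\hat{D}$ map $\mathcal{H}_d\to\mathcal{H}_d^\bot$. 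This is precisely what Definition~\ref{dfn:direct sum} guarantees for matching $\oplus$-components and for matching $\boxplus$-components, so the sums $\hat{A}+\hat{C}$ and $\hat{B}+\hat{D}$ are well defined and the claimed equalities follow at once. I expect no genuine difficulty beyond recording this matching, and anticipate that the final proof will be a single line citing Definition~\ref{dfn:direct sum}.
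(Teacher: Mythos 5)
Your proposal is correct and matches the paper's approach: the paper's proof is simply ``Immediate from the definitions of the direct sum operators,'' and your argument is exactly that verification, spelled out via the block-matrix representation and the matching of domains and codomains. Your careful bookkeeping of block positions is the only substantive content, and it is precisely what the paper's one-line proof implicitly relies on.
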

\begin{proof}
  Immediate from the definitions of the direct sum operators.
\end{proof}

\begin{cor} \label{Conjucate of operators}
  The self-conjugate of direct sum operators is calculated as follows:
  \begin{align}
    \begin{split}
    (\hat{A}\oplus \hat{B})^\dagger = \hat{A}^\dagger \oplus \hat{B}^\dagger,\\
    (\hat{A}\boxplus \hat{B})^\dagger = \hat{B}^\dagger \boxplus \hat{A}^\dagger.
    \end{split}
  \end{align}
\end{cor}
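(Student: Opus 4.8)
The plan is to pass to the $2\times 2$ block-matrix representation induced by the orthogonal decomposition $\mathcal{H}_d\oplus\mathcal{H}_d^\bot$ and to read off the adjoint blockwise. By Definition~\ref{dfn:direct sum}, the diagonal operator $\hat{A}\oplus\hat{B}$ is represented by the block matrix carrying $\hat{A}$ in the $(\mathcal{H}_d,\mathcal{H}_d)$ slot and $\hat{B}$ in the $(\mathcal{H}_d^\bot,\mathcal{H}_d^\bot)$ slot, with vanishing off-diagonal blocks, whereas the off-diagonal operator $\hat{A}\boxplus\hat{B}$ places $\hat{A}$ in the $(\mathcal{H}_d,\mathcal{H}_d^\bot)$ slot and $\hat{B}$ in the $(\mathcal{H}_d^\bot,\mathcal{H}_d)$ slot, here writing each slot as (codomain, domain). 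First I would fix orthonormal bases of $\mathcal{H}_d$ and $\mathcal{H}_d^\bot$ and concatenate them into a basis of the total space, so that these block matrices become genuine matrices and the adjoint is literally the conjugate transpose.

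With this dictionary in hand, the computation is mechanical: the conjugate transpose of a block matrix daggers each block and transposes the positions of the blocks. For the diagonal case the off-diagonal blocks stay zero and each diagonal block is simply daggered in place, yielding $\hat{A}^\dagger$ and $\hat{B}^\dagger$ in the same slots, i.e. $\hat{A}^\dagger\oplus\hat{B}^\dagger$. For the off-diagonal case the transposition of positions is the essential point: the block $\hat{A}$ occupying the $(\mathcal{H}_d,\mathcal{H}_d^\bot)$ slot reappears, as $\hat{A}^\dagger$, in the $(\mathcal{H}_d^\bot,\mathcal{H}_d)$ slot, and $\hat{B}$ moves symmetrically. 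Reinterpreting the resulting matrix through the $\boxplus$ dictionary then gives $\hat{B}^\dagger\boxplus\hat{A}^\dagger$, the swap of labels being exactly the fingerprint of the block transposition.

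I expect the only genuine subtlety, and the step most likely to trip up a careless reader, to be this label swap in the $\boxplus$ identity. Concretely, one must track domains and codomains: since $\hat{A}:\mathcal{H}_d^\bot\to\mathcal{H}_d$ its adjoint satisfies $\hat{A}^\dagger:\mathcal{H}_d\to\mathcal{H}_d^\bot$, which is precisely the slot of the second argument of a $\boxplus$ operator, while $\hat{B}:\mathcal{H}_d\to\mathcal{H}_d^\bot$ gives $\hat{B}^\dagger:\mathcal{H}_d^\bot\to\mathcal{H}_d$, the slot of the first argument. This mapping bookkeeping is what certifies the ordering $\hat{B}^\dagger\boxplus\hat{A}^\dagger$ rather than $\hat{A}^\dagger\boxplus\hat{B}^\dagger$. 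As an independent sanity check I would verify consistency with Corollary~\ref{Product of operators on the direct sum space} via the identity $((\hat{A}\boxplus\hat{B})(\hat{C}\boxplus\hat{D}))^\dagger = (\hat{C}\boxplus\hat{D})^\dagger(\hat{A}\boxplus\hat{B})^\dagger$; everything else is immediate from the definitions, exactly as in the preceding two corollaries.
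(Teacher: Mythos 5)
Your proof is correct, including the one place where it is possible to go wrong: the label swap in the $\boxplus$ identity, which you justify properly by tracking domains and codomains of the adjoints ($\hat{A}:\mathcal{H}_d^\bot\to\mathcal{H}_d$ forces $\hat{A}^\dagger:\mathcal{H}_d\to\mathcal{H}_d^\bot$, the second slot of a $\boxplus$ operator). However, your route differs from the paper's. The paper argues basis-free and directly from the defining property of the adjoint: it decomposes arbitrary vectors $\ket{\Psi}=\ket{\psi}+\ket{\psi^\bot}$ and $\ket{\Phi}=\ket{\phi}+\ket{\phi^\bot}$ along $\mathcal{H}_d\oplus\mathcal{H}_d^\bot$, expands $\braket{\Psi}{(\hat{A}\boxplus\hat{B})\Phi}=\braket{\psi^\bot}{\hat{B}\phi}+\braket{\psi}{\hat{A}\phi^\bot}$, moves each operator across the inner product, and reassembles the result as $\braket{(\hat{B}^\dagger\boxplus\hat{A}^\dagger)\Psi}{\Phi}$, with the $\oplus$ case dispatched ``similarly.'' You instead fix orthonormal bases, pass to the $2\times 2$ block-matrix representation, and invoke the fact that the adjoint is the conjugate transpose, which both daggers blocks and transposes their positions. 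Your argument buys mechanical transparency and makes the block-transposition origin of the swap vivid, but it silently outsources the key fact (matrix of the adjoint in an orthonormal basis equals the conjugate transpose) to standard linear algebra --- a fact whose proof is essentially the inner-product computation the paper performs explicitly, so the paper's version is the more self-contained of the two; it is also basis-free, though in the finite-dimensional setting the paper restricts to, your basis choice is harmless. Your closing consistency check against the product rule via $\bigl((\hat{A}\boxplus\hat{B})(\hat{C}\boxplus\hat{D})\bigr)^\dagger=(\hat{C}\boxplus\hat{D})^\dagger(\hat{A}\boxplus\hat{B})^\dagger$ is a nice confirmation of the ordering but is not needed for the proof.
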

\begin{proof}
  Let $\mathcal{H}_d$ and $\mathcal{H}_d^\bot$ be Hilbert spaces and consider the direct sum $\mathcal{H}_d \oplus \mathcal{H}_d^\bot$. Any vectors $\ket{\Psi}, \ket{\Phi}\in \mathcal{H}_d \oplus \mathcal{H}_d^\bot$ can be decomposed into only vectors $\ket{\psi}, \ket{\phi} \in \mathcal{H}_d$ and $\ket{\psi^\bot}, \ket{\phi^\bot} \in \mathcal{H}_d^\bot$, we have:
  \begin{align}
    \ket{\Psi} = \ket{\psi} + \ket{\psi^\bot},\quad \ket{\Phi} = \ket{\phi} + \ket{\phi^\bot}
  \end{align}
  Therefore, for the operator $\hat{A} \boxplus \hat{B}$ acting on the space $\mathcal{H}_d \oplus \mathcal{H}_d^\dagger$, 
\begin{align}
  \begin{split}
    &\braket{\Psi}{(\hat{A} \boxplus \hat{B})\Phi}\\
    &= \braket{(\psi + \psi^\bot)}{(\hat{A} \boxplus \hat{B})(\phi +\phi^\bot)} \\
    &= \braket{\psi^\bot}{\hat{B} \phi}+\braket{\psi}{\hat{A} \phi^\bot} \\
    &= \braket{\hat{B}^\dagger \psi^\bot}{\phi} + \braket{\hat{A}^\dagger \psi}{\phi^\bot}\\
    &= \braket{(\hat{B}^\dagger \psi^\bot) + (\hat{A}^\dagger \psi)}{\phi + \phi^\bot} \\
    &= \braket{(\hat{B}^\dagger \boxplus \hat{A}^\dagger)(\psi + \psi^\bot) } { \phi+ \phi^\bot}\\
    &= \braket{(\hat{B}^\dagger \boxplus \hat{A}^\dagger)\Psi}{\Phi}
    \end{split}
  \end{align}
 Therefore
  \begin{align}
    (\hat{A} \boxplus \hat{B})^\dagger = \hat{B}^\dagger \boxplus \hat{A}^\dagger.
  \end{align}
  Similarly, we can show 
  \begin{align}
    (\hat{A} \oplus \hat{B})^\dagger = \hat{A}^\dagger \oplus \hat{B}^\dagger.
  \end{align}
\end{proof}

\begin{cor} \label{Properties of traces in direct sum spaces}
  The following holds for traces of operators on the direct sum space, respectively:
  \begin{align}
    \text{tr}(\hat{A}\oplus\hat{B}) &= \text{tr}(\hat{A}) + \text{tr}(\hat{B}).
  \end{align}
\end{cor}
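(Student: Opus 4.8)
The plan is to evaluate the trace directly in a basis adapted to the direct-sum decomposition, reducing the claim to the familiar fact that the trace of a block-diagonal operator is the sum of the traces of its diagonal blocks. First I would fix an orthonormal basis $\{\ket{\psi_i}\}_{i}$ of $\mathcal{H}_d$ together with an orthonormal basis $\{\ket{\psi_j^\bot}\}_{j}$ of $\mathcal{H}_d^\bot$. Since $\mathcal{H}_d$ and $\mathcal{H}_d^\bot$ are mutually orthogonal, the union of these two families is an orthonormal basis of the total space $\mathcal{H}_d \oplus \mathcal{H}_d^\bot$, so the trace can be written as the sum of diagonal matrix elements taken over both families.

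Next I would invoke Definition~\ref{dfn:direct sum}: because $\hat{A}\oplus\hat{B}$ maps $\mathcal{H}_d$ into $\mathcal{H}_d$ via $\hat{A}$ and maps $\mathcal{H}_d^\bot$ into $\mathcal{H}_d^\bot$ via $\hat{B}$, the diagonal elements decouple as $\bra{\psi_i}(\hat{A}\oplus\hat{B})\ket{\psi_i} = \bra{\psi_i}\hat{A}\ket{\psi_i}$ and $\bra{\psi_j^\bot}(\hat{A}\oplus\hat{B})\ket{\psi_j^\bot} = \bra{\psi_j^\bot}\hat{B}\ket{\psi_j^\bot}$, with no contribution linking the two subspaces by orthogonality. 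Summing the first family then reproduces $\text{tr}(\hat{A})$ and summing the second reproduces $\text{tr}(\hat{B})$, which yields the claimed identity.

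Since this is essentially the block-diagonal trace identity, no serious obstacle arises; the only point requiring a moment of care is confirming that the concatenated set is genuinely a complete orthonormal basis for the whole direct-sum space. This is immediate from the orthogonality of the two subspaces together with $\dim(\mathcal{H}_d \oplus \mathcal{H}_d^\bot) = \dim \mathcal{H}_d + \dim \mathcal{H}_d^\bot$ in the finite-dimensional setting assumed throughout the paper. Because the trace is independent of the chosen basis, the identity then holds in general.
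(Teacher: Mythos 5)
Your proof is correct and follows essentially the same route as the paper: choose orthonormal bases of $\mathcal{H}_d$ and $\mathcal{H}_d^\bot$, concatenate them into an orthonormal basis of the direct sum, and observe that the diagonal matrix elements decouple into those of $\hat{A}$ and $\hat{B}$. Your added remarks on completeness of the concatenated basis and basis-independence of the trace make the argument slightly more explicit than the paper's, but the underlying idea is identical.
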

\begin{proof}
  Let's set the an orthonormal basis of $\mathcal{H}_d\oplus \mathcal{H}_d^\bot$:
  \begin{align}
    \{ \ket{e_1}, \ldots, \ket{e_d},  \ket{e^\bot_1}, \ldots, \ket{e_d^\bot}\}.
  \end{align}
  Therefore
  \begin{align}
    \begin{split}
    \text{tr}(\hat{A}\oplus\hat{B}) &= \sum_i \bra{e_i} (\hat{A}\oplus\hat{B}) \ket{e_i}+  \sum_i \bra{e_i^\bot } (\hat{A}\oplus\hat{B}) \ket{e_i^\bot}\\
    &= \sum_i \bra{e_i} \hat{A}  \ket{e_i} +  \sum_i \bra{e_i^\bot } \hat{B} \ket{e_i^\bot}\\
    &= \text{tr}(\hat{A}) + \text{tr}(\hat{B})
    \end{split}
  \end{align}
  This holds for any orthonormal basis. So the Corollary is proved.
\end{proof}
Using the corollaries we showed above, we can show the validity of the proposed measurement process in this paper.

\section{The validity of the measurement process}\label{app:validity}
Firstly, let us prove the validity of $\hat{K}_0$ and $\hat{K}_1$ defined by Eq.\eqref{K_0,K_1} as Kraus operators. Recall the operators defined by
\begin{align}
  \begin{split}
    \hat{K}_0 &= (\cos\phi \cdot \hat{1}) \oplus \hat{0} + (\sin\phi \cdot \hat{1}) \boxplus \hat{0}, \\
  \hat{K}_1 &= \hat{0} \oplus (-\cos\phi \cdot \hat{1}) + \hat{0} \boxplus (\sin\phi \cdot \hat{1}).
  \end{split}
\end{align}
Since for any linear operator $\hat{A}$ on a finite Hilbert space, the operator $\hat{A}^\dagger \hat{A}$ is a positive operator, the positivity condition is automatically satisfied.
As for the completeness condition, we observe that
\begin{align}
  \begin{split}
  \hat{K}_0^\dagger \hat{K}_0 + \hat{K}_1^\dagger \hat{K}_1 
  &= ( \cos^2 \phi + \sin^2 \phi ) \hat{1} \oplus ( \sin^2 \phi + \cos^2 \phi ) \hat{1} \\
  &= \hat{1} \oplus \hat{1} = \hat{1}.
  \end{split}
\end{align}

Therefore, $\{\hat{K}_0, \hat{K}_1\}$ form a valid set of Kraus operators: each $\hat{K}_i^\dagger \hat{K}_i$ is positive and their sum equals the identity operator on $\mathcal{H}_d \oplus \mathcal{H}_d^\perp$.

Next, we will show the validity of the measurement process defined by Eq.\eqref{measurement} and Eq.\eqref{recovery}.
Before directly going to the concrete form of the measurement operators, the following corollary stands immediately from the definitions of direct sum operators:
\begin{cor}[Sum of positive operators] \label{Sum of positive operators}
  If arbitrary linear operators $\hat{A}\in \mathcal{L}(\mathcal{H}_d)$ and $\hat{B}\in \mathcal{L}(\mathcal{H}_d^\bot)$ are the positive operators, $\hat{A}\oplus \hat{B}$ is also the positive operator.
\end{cor}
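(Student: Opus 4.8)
The plan is to verify positivity directly from the defining expectation-value criterion, namely that $\hat{A}\oplus\hat{B}$ is positive if and only if $\bra{\Psi}(\hat{A}\oplus\hat{B})\ket{\Psi}\ge 0$ for every $\ket{\Psi}\in\mathcal{H}_d\oplus\mathcal{H}_d^\bot$. The essential structural fact I will exploit is the orthogonal decomposition of the direct-sum space, exactly as in the proof of Corollary~\ref{Conjucate of operators}: any $\ket{\Psi}$ splits uniquely as $\ket{\Psi}=\ket{\psi}+\ket{\psi^\bot}$ with $\ket{\psi}\in\mathcal{H}_d$ and $\ket{\psi^\bot}\in\mathcal{H}_d^\bot$.

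First I would apply the definition of the block-diagonal operator (Definition~\ref{dfn:direct sum}) to obtain $(\hat{A}\oplus\hat{B})\ket{\Psi}=\hat{A}\ket{\psi}+\hat{B}\ket{\psi^\bot}$, noting that $\hat{A}\ket{\psi}$ remains in $\mathcal{H}_d$ while $\hat{B}\ket{\psi^\bot}$ remains in $\mathcal{H}_d^\bot$. Next I would compute the expectation value $\bra{\Psi}(\hat{A}\oplus\hat{B})\ket{\Psi}$ by expanding the bra as $\bra{\psi}+\bra{\psi^\bot}$ and distributing across the four resulting terms. The crucial step is that the two cross terms, $\bra{\psi}\hat{B}\ket{\psi^\bot}$ and $\bra{\psi^\bot}\hat{A}\ket{\psi}$, vanish identically because $\mathcal{H}_d$ and $\mathcal{H}_d^\bot$ are mutually orthogonal, leaving only the diagonal contributions $\bra{\psi}\hat{A}\ket{\psi}+\bra{\psi^\bot}\hat{B}\ket{\psi^\bot}$.

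Finally I would invoke the hypotheses: since $\hat{A}$ is positive on $\mathcal{H}_d$ we have $\bra{\psi}\hat{A}\ket{\psi}\ge 0$, and since $\hat{B}$ is positive on $\mathcal{H}_d^\bot$ we have $\bra{\psi^\bot}\hat{B}\ket{\psi^\bot}\ge 0$. Their sum is therefore nonnegative for arbitrary $\ket{\Psi}$, which is precisely the positivity of $\hat{A}\oplus\hat{B}$. There is no genuine obstacle here—the argument reduces to a single orthogonality observation—so the only point requiring care is to confirm that the cross terms truly drop out, which follows from the direct-sum structure itself rather than from any property of $\hat{A}$ or $\hat{B}$.
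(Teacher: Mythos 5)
Your proof is correct and is precisely the ``immediate from the definitions'' argument the paper has in mind: it states Corollary~\ref{Sum of positive operators} without a written proof, and your computation---decomposing $\ket{\Psi}=\ket{\psi}+\ket{\psi^\bot}$ exactly as in the paper's proof of Corollary~\ref{Conjucate of operators}, noting the cross terms vanish by orthogonality, and reducing to $\bra{\psi}\hat{A}\ket{\psi}+\bra{\psi^\bot}\hat{B}\ket{\psi^\bot}\ge 0$---is the standard one-line verification being alluded to. No gaps; in the finite-dimensional complex setting assumed here, the expectation-value criterion you use fully characterizes positivity, so nothing further is needed.
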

This corollary shows that positivity of the measurement process defined by Eq.\eqref{measurement} and Eq.\eqref{recovery} is valid.
Thus, we need to show that the measurement process satisfies the condition of completeness.

As for Eq.\eqref{measurement}, using Corollary~\ref{Sum of operators on the direct sum space} and Corollary~\ref{Conjucate of operators}, we can show that
\begin{align}
  \begin{split}
  \sum_{\nu} \left( \frac{1}{\sqrt{n}}\hat{1}\oplus\hat{M}_\nu \right)^\dagger\left( \frac{1}{\sqrt{n}}\hat{1}\oplus\hat{M}_\nu \right)&=\sum_{\nu} \left( \frac{1}{n}\hat{1}\oplus \hat{M}_\nu^\dagger \hat{M}_\nu \right)\\
  &= \hat{1}\oplus \hat{1} = \hat{1}.
  \end{split}
\end{align}
As for Eq.\eqref{recovery}, we can show that
\begin{align}
    \left( \hat{1}\oplus \hat{0} \right)^\dagger \left( \hat{1}\oplus \hat{0} \right) + \left( \hat{0}\oplus \hat{1} \right)^\dagger \left( \hat{0}\oplus \hat{1} \right)= \hat{1}\oplus \hat{1} = \hat{1}.
\end{align}
Therefore, since the measurement process defined by Eq.\eqref{measurement} and Eq.\eqref{recovery} satisfies the positivity and completeness conditions, these are valid measurement operators.

\section{The special case}\label{app:special_case}
In this section, we consider a special case of the proposed scheme.

This special case arises when the orthonormal basis of the ancillary qubit is used as the computational basis of the qubit in the quantum circuit—namely, the eigenstates of the Pauli-$Z$ operator $\hat{\sigma}_z$, denoted by ${\ket{0}, \ket{1}}$.

In this setting, the scheme admits a particularly transparent representation on a quantum circuit, as illustrated in Fig.~\ref{Diagram of this scheme}. The operations must be expressed in terms of tensor products to depict the scheme within a circuit framework.

In particular, for Step~\ref{step:measurement} of our scheme, which implements a measurement on the copied state using the measurement operators defined in Eq.\eqref{measurement}
\begin{align}
\left\{ \frac{1}{\sqrt{n}} \hat{1} \oplus \hat{M}_{\nu} \right\}_\nu,
\end{align}
the direct-sum form can be rewritten in tensor-product notation as
\begin{align}
\frac{1}{\sqrt{n}}\hat{1} \otimes \ketbra{0} + \hat{M}_{\nu} \otimes \ketbra{1}.
\end{align}
This corresponds to a conditional measurement: if the ancillary qubit is in the state $\ket{1}$, the measurement $\hat{M}_{\nu}$ is performed on the principal system; otherwise, the system remains unchanged. Consequently, the overall state evolves into a coherent superposition of a measured and an unmeasured state. This conditional structure is explicitly shown in the circuit diagram of Fig.\ref{Diagram of this scheme}.

Similarly, Step~\ref{step:recovery}, which stochastically recovers the original quantum state, is defined by the measurement operators in Eq.\eqref{recovery}:
\begin{align}
\{\hat{R}_{\mu_0} := \hat{1} \oplus \hat{0},\quad \hat{R}_{\mu_1} := \hat{0} \oplus \hat{1}\}.
\end{align}
Expressed in tensor-product form, these become
\begin{align}
\hat{R}_{\mu_0} = \hat{1} \otimes \ketbra{0},
\quad
\hat{R}_{\mu_1} = \hat{1} \otimes \ketbra{1}.
\end{align}
Thus, Step~\ref{step:recovery} corresponds to a projective measurement of the Pauli-$Z$ operator on the ancillary qubit:
\begin{align}
\hat{\sigma}_z := \ketbra{0} - \ketbra{1}.
\end{align}
In Fig.\ref{Diagram of this scheme}, this is represented as a diagram of a projective Pauli-$Z$ measurement on the ancillary qubit.
\begin{figure}[h]
  \centering
  \includegraphics{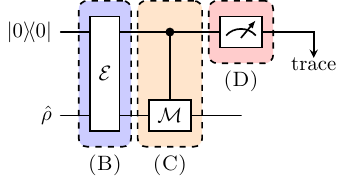}
  \caption{Diagram of the proposed scheme of a special case represented as a quantum circuit. Step~\ref{step:copying}, \ref{step:measurement}, and \ref{step:recovery} are highlighted.}
  \label{Diagram of this scheme}
\end{figure}

\end{document}